\documentclass[a4paper,11pt]{article}

\usepackage{fullpage}

\usepackage{latexsym}
\usepackage{url}
\usepackage{listings}
\usepackage{xspace}
\usepackage{color}

\usepackage{amsthm}
\usepackage{amsmath}

\usepackage{algorithm}
\usepackage{algpseudocode}

\usepackage{booktabs} 

\usepackage{tikz}

\newcommand{\eg}{e.g.\@\xspace}

\newcommand{\band}{\wedge}
\newcommand{\isodd}{\mathrm{odd}}
\newcommand{\isevn}{\mathrm{even}}
\newcommand{\shifthi}[2]{#1\!\uparrow\!#2}
\newcommand{\shiftlo}[2]{#1\!\downarrow\!#2}
\newcommand{\popcount}{\mathtt{popcnt}}

\newcommand{\ceil}[1]{\lceil #1\rceil}

\newcommand{\mpisendrecv}{\textsf{MPI\_\-Sendrecv}\xspace}

\newcommand{\mpibarrier}{\textsf{MPI\_\-Barrier}\xspace}
\newcommand{\mpiscan}{\textsf{MPI\_\-Scan}\xspace}
\newcommand{\mpiexscan}{\textsf{MPI\_\-Exscan}\xspace}
\newcommand{\mpiallreduce}{\textsf{MPI\_\-Allreduce}\xspace}
\newcommand{\mpireducelocal}{\textsf{MPI\_\-Reduce\_local}\xspace}

\newcommand{\mpilong}{\texttt{MPI\_LONG}\xspace}
\newcommand{\mpibxor}{\texttt{MPI\_BXOR}\xspace}

\newcommand{\senddata}{\textsf{Send}\xspace}
\newcommand{\recvdata}{\textsf{Recv}\xspace}

\newcommand{\bidirec}[2]{\textsf{Send}(#1)\parallel\textsf{Recv}(#2)\xspace}

\newtheorem{proposition}{Proposition}

\newtheorem{corollary}{Corollary}

\newcommand{\gcc}{\texttt{gcc~12.1.0}\xspace}
\newcommand{\hydraopenmpi}{OpenMPI~4.1.6\xspace}

\newcommand{\hydrampich}{\texttt{mpich~4.1.2}\xspace}

\newcommand{\benchrep}{$155$\xspace}
\newcommand{\benchwarm}{$12$\xspace}

\title{Two Efficient Message-passing Exclusive Scan Algorithms\thanks{This note improves significantly over \texttt{arXiv:2507.04785}}}

\author{Jesper Larsson Tr\"aff\\
  TU Wien\\
  Faculty of Informatics\\
  Institute of Computer Engineering, Research Group Parallel Computing 191-4\\
  Treitlstrasse 3, 5th Floor, 1040 Vienna, Austria}
\date{April 2026}

\begin{document}

\maketitle

\begin{abstract}
  Parallel scan primitives compute element-wise inclusive or exclusive
  prefix sums of input vectors contributed by $p$ consecutively ranked
  processors under an associative, possibly expensive, binary operator
  $\oplus$. In message-passing systems with bounded, one-ported
  communication capabilities, at least $\ceil{\log_2 p}$ or
  $\ceil{\log_2 (p-1)}$ send-receive communication rounds are required
  to perform the scans.  While there are well-known, simple algorithms
  for the inclusive scan that solve the problem in $\ceil{\log_2 p}$
  send-receive communication rounds with $\ceil{\log_2 p}$
  applications of the $\oplus$ operator, the exclusive scan is
  different and has been much less addressed.

  By considering natural invariants for the exclusive prefix sums
  problem, we present two different algorithms that are efficient in
  the number of communication rounds and in the number of applications
  of the $\oplus$ operator. The first algorithm consists of an
  inclusive scan phase and an exclusive scan phase and trades the
  number of communication rounds against the number of applications of
  the $\oplus$ operator.  For any chosen number of inclusive scan
  communication rounds $q'\geq 1$, the algorithm performs
  $q=\ceil{\log_2(p-1)+\log_2\frac{2^{q'}}{2^{q'}-1}}=
  \ceil{\log_2\frac{2^{q'}}{2^{q'}-1}(p-1)}$ send-receive
  communication rounds with $q+q'-2$ applications of the $\oplus$
  operator. The smallest number of inclusive scan rounds with
  $q=\ceil{\log_2 p}$ rounds in total is $q'\geq
  q-\log_2(2^q-p+1)$. It includes as special cases two standard
  implementations of the exclusive scan operation, namely reducing the
  problem to an inclusive scan problem on $p-1$ processors at the cost
  of one extra communication round, or modifying an inclusive scan
  algorithm to also compute the exclusive scan at the cost of twice as
  many operator applications. The other algorithm is a modification of
  a round-optimal all-reduce algorithm, and the number of additional
  applications of the $\oplus$ operator is dependent on the number of
  bits set (popcount of) in $p-1$.  Both algorithms are relevant for
  small(er) input vectors where performance is dominated by the number
  of communication rounds. For large input vectors, other (pipelined,
  fixed-degree tree) algorithms must be used.

  Parallel scan primitives are included in MPI as the \mpiscan and
  \mpiexscan collectives.  An experimental comparison with
  implementations in MPI of the \mpiexscan collective operation
  indicates that the considerations here are of practical relevance.
\end{abstract}


\section{Introduction}

The parallel scan is a fundamental primitive in parallel
algorithmics, typically used for bookkeeping and load balancing
purposes, but often also directly in and for the algorithms
themselves, see for
instance~\cite{Blelloch89,LakhotiaPetriniKannanPrasanna22,CopikGrosserHoeflerBientinesiBerkels22}.
Scan operations compute the inclusive or exclusive prefix sums over
inputs provided by $p$ successively ranked (numbered) processors under
a given, associative, binary operator $\oplus$. Each of the $p$
processors $r,0\leq r<p$ has an input vector $V_r$ of $m$ elements and
computes its element-wise, $r$th prefix sum $W_r$ as follows:
\begin{itemize}
\item
  Inclusive scan, for $0\leq r<p$:
  \begin{align*}
    W_r & = \oplus_{i=0}^{r}V_i
  \end{align*}
\item
  Exclusive scan, for $0<r<p$:
  \begin{align*}
    W_r & = \oplus_{i=0}^{r-1}V_i
  \end{align*}
\end{itemize}

For distributed memory, parallel message-passing programming, both
inclusive and exclusive scan operations are commonly standardized in
this way, \eg, in MPI~\cite{MPI-4.1} as \mpiscan and \mpiexscan,
respectively. Both primitives find extensive
use~\cite{ChunduriParkerBalajiHarmsKumaran18,LagunaMarshallMohrorRufenachtSkjellumSultana19}.

Arguably, the exclusive scan is the more important, more often used
variant of the scan primitives~\cite{LakhotiaPetriniKannanPrasanna22}.
In a shared-memory setting, where input and output are stored in
shared-memory arrays, the difference is possibly not too important,
since the $r$th exclusive prefix equals exactly the $(r-1)$th
inclusive prefix which can be read immediately in the result array. In
a distributed memory, message-passing setting, this reduction from
exclusive to inclusive scan involves additional communication from processor
$r-1$ to processor $r$ for all processors $r,0<r<p$.

Nevertheless, message-passing algorithms in the literature deal mostly
with the inclusive scan primitive, and trivially reduce the exclusive
scan primitive to an inclusive scan with one extra communication round
either after or before the inclusive scan (which need to be performed
on only $p-1$ processors).  As can easily be seen and as has been
discovered many times, the inclusive scan problem can be solved in
$\ceil{\log_2 p}$ communication rounds in which processors
simultaneously send and receive partial
results~\cite{HillisSteele86,KoggeStone73,KruskalRudolphSnir85} (or
$\ceil{\log_{k+1} p}$ rounds for $k$-ported systems~\cite{LinYeh99}):
We will recapitulate the algorithm in Section~\ref{sec:algorithm}.  In
systems with one-ported communication capabilities, this is optimal,
since the last processor $r=p-1$ needs information (partial results)
from all $p$ processors and the number of processors from which $r$
can have information after $k$ communication rounds is at most
$2^k$. These algorithms send and receive full, $m$-element input or
partial result vectors in each communication round, and are therefore
mostly relevant for vectors with small numbers of elements $m$.  For
large vectors, pipelined, fixed-degree tree algorithms should be used
\cite{LakhotiaPetriniKannanPrasanna22,Traff09:twotree,Traff06:scan},
which, however, require a larger number of communication rounds and
are therefore not as suitable for small vectors. MPI libraries use
different combinations of algorithms and different choices for their
implementations of \mpiscan and \mpiexscan, but, as can be seen in
Appendix~\ref{sec:evaluation}, often select logarithmic round,
full-vector algorithms up to quite large values of $m$.

In this paper, we examine algorithms for the exclusive scan
operation, focusing on the number of communication rounds which is one
factor influencing the achievable and observable performance for small
vectors on real, distributed-memory parallel computing systems.  A
round lower bound is clearly $\ceil{\log_2 (p-1)}$ since a reduction
of $p-1$ inputs has to be computed; whether this is tight seems not to
be known.  The reduction of the exclusive scan to an inclusive scan
plus an extra communication step will in the best case take
$1+\ceil{\log_2 (p-1)}$ communication rounds, which is at least one
more than the lower bound. Modifying an inclusive scan algorithm to
compute, in essence, both inclusive and exclusive scans, leads to an
algorithm taking $\ceil{\log_2 p}$ communication rounds, but at the
expense of two applications of the $\oplus$ operator per round (except
for the first round, so the number of applications is actually
$2\ceil{\log_2 p}-2$). This can be a significant drawback when
$\oplus$ is expensive and as $m$ becomes larger. Performing an
inclusive scan and then letting each processor remove its own
contribution would give $\ceil{\log_2 p}$ communication rounds with
$\ceil{\log_2 p}+1$ operator applications, but works only if the
$\oplus$ operator has an inverse $\oplus^{-1}$ which is often not the
case (min/max operators).

We first present an algorithm that combines the two observations and
performs the exclusive scan in $q=\ceil{\log_2
  (p-1)+\log_2\frac{2^{q'}}{2^{q'}-1}}$ simultaneous send-receive
communication rounds with $q-q'-2$ applications of the $\oplus$
operator where $q'$ is the number of initial inclusive scan
communication rounds and can be chosen freely. A smallest $q'$ for
which the total number of rounds is $q=\ceil{\log_2 p}$ is given by
$q'\geq q-\log_2(2^q-p+1)$.

Secondly, we modify a round-optimal algorithm for the all-reduce
(\mpiallreduce)
operation~\cite{BarNoyKipnisSchieber93,Traff25:reducescatter} to solve
instead the exclusive prefix sums problem. This algorithm takes always
$\ceil{\log_2 p}$ communication rounds, but the number of additional
applications of the $\oplus$ operator depends on the popcount (number
of bits set) in $p-1$ as $\popcount(p-1)-1-\isevn(p)$.

It remains open whether an algorithm running in $\ceil{\log_2 (p-1)}$
rounds with at most $\ceil{\log_2 (p-1)}$ applications of $\oplus$
exists.

In the appendix, we collect experimental evidence on a small
high-performance cluster with $36$ compute nodes, each with $32$
processor cores with two commonly used, state-of-the-art MPI libraries to
show that the considerations, apart from being theoretically interesting,
can be worthwhile and have concrete, practical impact for MPI implementations
of the parallel exclusive scan primitive.

\section{Inclusive and Exclusive Scan Algorithms}
\label{sec:algorithm}

\begin{algorithm}
  \caption{Algorithm for the exclusive prefix sums operation for
    processor $r$ with adjusted doubling skips $s$. Each processor has
    input in $V$ and computes the $r$th element wise exclusive prefix
    sum in rank order into $W$. The associative reduction operator is
    $\oplus$. For a given $p', p'<p$, exclusive and inclusive prefix sums are
    computed by doubling $s$ as long as $s<p'$. After this, $s$ is
    adjusted to $s-1$ and doubled further to compute only exclusive prefix
    sums as long as $s<p$. For a chosen number of inclusive scan rounds $q'$,
    set $p'=2^{q'}$. The best $p'$ is $2^\ceil{\log_2 p}-p+1$.}
  \label{alg:exscan}
  \begin{algorithmic}
    \Procedure{ExScan}{$V,W,\oplus$}
    \State $s\gets 1$ \Comment First skip
    \State $t,f\gets r+s,r-s$ \Comment To- and from-processors for first round
    \If{$0\leq f\band t<p$}
    \State $\bidirec{V,t}{W,f}$
    \ElsIf{$0\leq f$}
    $\recvdata(W,f)$
    \ElsIf{$t<p$} $\senddata(V,t)$
    \EndIf
    \State $s\gets \shifthi{s}{1}$ \Comment Double
    \While{$s<p'$} \Comment Inclusive scan phase, straight doubling 
    \State $t,f\gets r+s,r-s$ \Comment To- and from-processors
    \If{$0\leq f\band t<p$}
    \State $W'\gets W\oplus V$
    \State $\bidirec{W',t}{T,f}$
    \State $W\gets T\oplus W$
    \ElsIf{$0\leq f$}
    \State $\recvdata(T,f)$
    \State $W\gets T\oplus W$
    \ElsIf{$0<r\band t<p$}
    \State $W'\gets W\oplus V$ \Comment Do this $W'$ computation at most once
    \State $\senddata(W',t)$
    \ElsIf{$0=r\band t<p$}
    $\senddata(V,t)$
    \EndIf
    \State $s\gets \shifthi{s}{1}$
    \EndWhile 
    \State $s\gets s-1$ \Comment Adjust skip and switch to exclusive scan phase, processor $r=0$ is now done
    \While{$s<p-1$}
    \State $t,f\gets r+s,r-s$ \Comment To- and from-processors
    \If{$1\leq f\band t<p$}
    \State $\bidirec{W,t}{T,f}$
    \State $W\gets T\oplus W$
    \ElsIf{$1\leq f$}
    \State $\recvdata(T,f)$
    \State $W\gets T\oplus W$
    \ElsIf{$1\leq r\band t<p$}
    $\senddata(W,t)$
    \EndIf
    \State $s\gets \shifthi{s}{1}$ \Comment Double
    \EndWhile
    \EndProcedure
  \end{algorithmic}
\end{algorithm}

We consider algorithms where the $p$ processors communicate using a
\emph{circulant graph}
pattern~\cite{Bruck97,Traff25:reducescatter}. In a communication
round, processor $r, 0\leq r<p$ simultaneously receives a vector from
processor $r-s$ (if $\geq 0$) and sends a vector to processor $r+s$
(if $<p$) for some positive \emph{skip} $s$ which in each round is the
same for all processors. We denote the skip for round $k$ as
$s_k$. The number of processors $p$ and the sequence of some $q$ skips
$s_k, k=0,1,\ldots, q-1$, both of which depend on $p$, define the circulant
graph.  In well-known, standard, logarithmic round inclusive scan
algorithms, call them either Hillis-Steele~\cite{HillisSteele86},
Kogge-Stone~\cite{KoggeStone73} or
Kruskal-Rudolph-Snir~\cite{KruskalRudolphSnir85}, the processors
maintain a partial result of the form
\begin{align*}
  W_r & = \oplus_{i=\max(0,r-s_k+1)}^{r}V_i
\end{align*}
as an invariant that holds before round $k, k=0,1,\ldots,\ceil{\log_2
  p}-1$ for a sequence of \emph{straight doubling} skips $s_k=2^k$.
Before the first round with $k=0$, this invariant can easily be
established by locally copying $V_r$ into $W_r$, and if the invariant
holds before round $k$, processor $r$ can receive the computed partial
result $W_{r-s_k}$ from processor $r-s_k$ (as long as $r-s_k\geq 0$)
and add it with $\oplus$ to its own computed partial result
$W_r$. Since $s_k+s_k=2^k+2^k=2^{k+1}=s_{k+1}$, the invariant now
holds before round $k+1$. When $s_k>r$, the invariant implies that
processor $r$ has indeed computed the $r$th inclusive prefix sum. To
avoid unfairly giving preference, we term this algorithm the
\emph{straight doubling inclusive scan} algorithm.

The straight doubling inclusive scan algorithm can be extended compute
both the inclusive and the exclusive scan by maintaining after the
first communication round instead the invariant
\begin{align*}
  W_r & = \oplus_{i=\max(0,r-s_k+1)}^{r-1}V_i
\end{align*}
which implies that processor $r$ has computed the $r$th exclusive
prefix sum upon termination.  Processor $r$ would then have to send in
$W_r\oplus V_r$ to processor $r+s_k$ in round $k$ (as long as
$r+s_k<p$), requiring two applications of $\oplus$ in each round,
except the first.  We call this exclusive scan algorithm the
\emph{two-$\oplus$ doubling exclusive scan} algorithm.

The exclusive scan seems more elusive than the inclusive scan. The
natural invariant, complementary to the inclusive scan invariant,
would be to maintain
\begin{align*}
  W_r & = \oplus_{i=\max(0,r-s_k)}^{r-1}V_i
\end{align*}
and use a similar, doubling communication pattern. Even with $s_0=1$,
the invariant does not hold before the first round, since processor
$r$ is missing the input from processor $r-1$ (for $r>0$). In order to
establish the invariant, we first let each processor $r$ receive
$V_{r-1}$ into $W_r$ from processor $r-1$.  The invariant now holds
for the next round, and with $s_k=2^{k-1}$ for
$k=1,2,\ldots,\ceil{\log_2(p-1)}$, we can let each processor receive
the computed partial result $W_{r-s_k}$ from processor $r-s_k$ and $\oplus$
it into the partial result $W_r$. Since
\begin{align*}
  W_{r-s_k} \oplus W_r & = 
  \left(\oplus_{i=\max(0,r-s_k-s_k)}^{r-s_k-1}V_i\right) \oplus
  \left(\oplus_{i=\max(0,r-s_k)}^{r-1}V_i\right) \\
  & = \oplus_{i=\max(0,r-s_{k+1})}^{r-1}V_i
\end{align*}
the invariant is reestablished and holds again before round $k+1$. The
invariant is void for processor $r=0$, therefore after the initial
round with $s_0=1$, there is no further communication with processor
$r=0$ necessary, and $\ceil{\log_2 (p-1)}$ additional rounds suffice
to complete the exclusive scan. In total, the number of communication
rounds is $1+\ceil{\log_2 (p-1)}$. We term this exclusive scan
algorithm the \emph{$1$-doubling exclusive scan} algorithm. It is
essentially the same algorithm as first shifting the input from
processor $r$ to processor $r+1$ and then performing a straight
doubling scan on the $p-1$ processors $r,r>0$. This algorithm has the
advantage that it performs only $\ceil{\log_2 (p-1)}$ applications of
$\oplus$.

To get the number of communication rounds closer to only $\ceil{\log_2
  (p-1)}$ without doubling the number of applications of $\oplus$, a
new idea is needed. Assume we first perform $q'$ initial rounds of
the two-$\oplus$ doubling exclusive scan algorithm. In the
last of these rounds, $q'-1$, processor $r$ computes
\begin{align*}
  \left(W_{r-s_{q'-1}}\oplus V_{r-s_{q'-1}}\right) \oplus W_r & =
  \left(\oplus_{i=\max(0,r-s_{q'-1}-s_{q'-1}+1)}^{r-s_{q'-1}-1}V_i\right) \oplus
  V_{r-s_{q'-1}} \oplus
  \left(\oplus_{i=\max(0,r-s_{q'-1}+1)}^{r-1}V_i\right) \\
  &= \left(\oplus_{i=\max(0,r-s_{q'-1}-s_{q'-1}+1)}^{r-1}V_i\right) \\
  &= \left(\oplus_{i=\max(0,r-s_{q'}}^{r-1}V_i\right) \quad .
\end{align*}
The last computed $\oplus$ sum follows if we take
$s_{q'}=s_{q'-1}+s_{q'-1}-1=2^{q'}-1$, which indeed gives the desired
exclusive prefix sums invariant after round $q'-1$. From this round
on, we can continue doubling as in the $1$-doubling exclusive scan
algorithm.  We term this hybrid the \emph{$q'$-doubling exclusive
scan} algorithm. It indeed contains both the $1$-doubling and the
two-$\oplus$ doubling algorithms as special cases, and is shown in
full detail as Algorithm~\ref{alg:exscan}. The algorithm doubles the
skip $s$ as long as $s<p'$ where $p'$ is chosen as $2^{q'}$ with two
applications of $\oplus$ in each round, then adjusts the skip $s$ for
the next round by subtracting one, and finishes by doubling $s$ as
long as now $s<p-1$ (since processor $r=0$ is done) with only one
application of $\oplus$ per round.  The notation $\bidirec{W,t}{T,f}$
denotes a simultaneous send and receive operation with data in buffers
$W$ and $T$ to and from processors $t$ and $f$. We use
$\shifthi{s}{1}$ to denote doubling by shifting (binary) $s$ upwards
by one position.  We have the following proposition.

\begin{proposition}
  \label{thm:exscan}
  For any chosen $q',1\leq q'\leq\ceil{\log_2 p}$, the exclusive scan
  problem is solved in $q=\ceil{\log_2
    (p-1)+\log_2\frac{2^{q'}}{2^{q'}-1}}=\ceil{\log_2\left(\frac{2^{q'}}{2^{q'}-1}(p-1)\right}$
  simultaneous send-receive communication rounds by
  Algorithm~\ref{alg:exscan} with at most $q+q'-2$ applications of
  the associative, binary operator $\oplus$ by any processor.
\end{proposition}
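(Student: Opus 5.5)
Correctness and cost are handled separately: correctness by carrying an invariant on $W_r$ through the two loops of Algorithm~\ref{alg:exscan}, and the round and operator counts by reading them off the skip sequence. For the \emph{inclusive scan phase} (first round and first while loop) I will show by induction on $k$ that after round $k$, $0\le k\le q'-1$, the invariant
\begin{align*}
  W_r & = \oplus_{i=\max(0,r-2^{k+1}+1)}^{r-1}V_i
\end{align*}
holds for every $r>0$. The base case $k=0$ is the first round, where $W_r=V_{r-1}$. For the step, take round $k\ge 1$ with skip $2^k$. Processor $r-2^k$ sends $W_{r-2^k}\oplus V_{r-2^k}$; for $r-2^k=0$ it sends $V_0$, which is consistent with $W_0$ being empty. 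The computation displayed just before the proposition then gives the range $\max(0,r-2^{k+1}+1)$ to $r-1$. Processors with $r<2^k$ receive nothing, and their invariant already holds because its lower index is clamped at $0$.

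After the last inclusive round ($s=2^{q'-1}$), the loop exits with $s=2^{q'}$, and the adjustment sets $s=2^{q'}-1$. At this point the invariant reads $W_r=\oplus_{i=\max(0,r-s)}^{r-1}V_i$, which is exactly the $1$-doubling exclusive invariant. For the \emph{exclusive scan phase} I reuse the argument given for the $1$-doubling algorithm. With skip $s$, processor $r\ge s+1$ receives $W_{r-s}$ covering $\max(0,r-2s)..r-s-1$, and concatenation yields the invariant for skip $2s$. Processor $0$ is excluded since $W_0$ is void, hence the conditions $f\ge 1$. The loop stops once $s\ge p-1$, at which point $\max(0,r-s)=0$ for all $r\le p-1$, so $W_r$ is the exclusive prefix sum. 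The case $q'=\ceil{\log_2 p}$ also needs a check: if the inclusive phase alone already covers everything, the exclusive loop does not execute.

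For the \emph{round count}, the skips are $1,2,\ldots,2^{q'-1}$, followed by $(2^{q'}-1)2^j$ for $j=0,1,\ldots$ as long as $(2^{q'}-1)2^j<p-1$. That gives $q'$ inclusive rounds and $j^*=\max\bigl(0,\ceil{\log_2\frac{p-1}{2^{q'}-1}}\bigr)$ exclusive rounds. The total is then
\begin{align*}
  q & = q'+\ceil{\log_2\tfrac{p-1}{2^{q'}-1}} = \ceil{\log_2\tfrac{2^{q'}(p-1)}{2^{q'}-1}} ,
\end{align*}
because $q'$ is an integer and can be moved inside the ceiling. The edge case where the exclusive phase is empty needs care: since $q'\le\ceil{\log_2 p}$, the quantity $\frac{2^{q'}}{2^{q'}-1}(p-1)$ is at most about $2^{q'}$. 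I still need to check that the ceiling equals $q'$ exactly whenever $2^{q'}-1\ge p-1$; this is where the formula is most delicate.

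For the \emph{operator count}, round $0$ costs none. Each of the remaining $q'-1$ inclusive rounds costs at most two applications per processor ($W'=W\oplus V$ and $T\oplus W$); a processor that only sends computes $W'$ once but may resend it. Each of the $q-q'$ exclusive rounds costs at most one. The total is therefore $2(q'-1)+(q-q')=q+q'-2$. The main obstacle I anticipate is the boundary bookkeeping: showing that processors which stop receiving early still satisfy the invariant and send correct data (in particular the $r=0$ versus $r>0$ send cases), and making the ceiling identity for $q$ hold at the extreme values of $q'$.
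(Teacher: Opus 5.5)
Your proposal is correct and follows the same route as the paper. The invariant you carry through the two phases is the one the paper establishes in the text before the proposition. The round and operator counts rest on the same inequality $(2^{q'}-1)2^{j}<p-1$ and the same tally $2(q'-1)+(q-q')=q+q'-2$. Your handling of the boundary processors is, if anything, more explicit than the paper's.

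The one step you leave open is the case $2^{q'}-1\ge p-1$, where the exclusive loop is empty; the paper's proof does not single it out either. It closes in one line, but the hypothesis on $q'$ enters through a lower bound, not the upper bound you point to. From $q'\le\lceil\log_2 p\rceil$ you get $2^{q'-1}<p$, i.e.\ $2^{q'-1}\le p-1$, which says the last inclusive round really communicates. Hence $x=\frac{p-1}{2^{q'}-1}\ge\frac{2^{q'-1}}{2^{q'}-1}>\frac12$, so $\lceil\log_2 x\rceil\ge 0$. The $\max(0,\cdot)$ in your $j^*$ is therefore redundant, and $q=q'+\lceil\log_2 x\rceil=\lceil\log_2\frac{2^{q'}(p-1)}{2^{q'}-1}\rceil$ holds uniformly, including when $q=q'$.
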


\begin{proof}
  Let $q'$ be the chosen number of inclusive scan rounds and take
  $p'=2^{q'}$. By the initial round and the first \textbf{while}-loop,
  there are clearly $1+(q'-1)=q'$ communication rounds for the
  inclusive scan phase. After that, processor $r=0$ is done, and the
  second \textbf{while}-loop takes $q-q'$ communication rounds for the
  exclusive scan phase. The total number of communication rounds is
  therefore $q'+(q-q')=q$. The total round count $q$ must be at least
  $\ceil{\log_2 p}$ and if chosen too large, the last rounds will, by
  the conditions, entail no communication. The exact number of actual
  communication rounds is computed as follows.
  \begin{align*}
    (2^{q'}-1)2^{q-q'} &< p-1 \Leftrightarrow \\
    2^{q-q'} &< \frac{p-1}{2^{q'}-1} \Leftrightarrow \\
    q-q' &< \log_2\frac{p-1}{2^{q'}-1} \Leftrightarrow \\
    q &< \log_2\frac{p-1}{2^{q'}-1}+q' \\
    &= \log_2\frac{p-1}{2^{q'}-1}+\log_2 2^{q'} \\
    &= \log_2(p-1) + \log_2\frac{2^{q'}}{2^{q'}-1} \\
    &= \log_2\frac{2^{q'}}{2^{q'}-1}(p-1) 
  \end{align*}
  The first round has no applications of $\oplus$, but the $q'-1$
  inclusive scan rounds have two applications each at most for any
  processor. The final $q-q'$ exclusive scan only rounds each have one
  application of $\oplus$ for the processors that are receiving
  partial results. The total number of applications is therefore at
  most $2(q'-1)+(q-q')=q+q'-2$ for any processor. 
\end{proof}

For the implementation, the smallest number of inclusive scan rounds
$q'$ for which the the total number of rounds is still at most
$\ceil{\log_2 p}$ can easily be computed.

\begin{corollary}
  \label{cor:bestq}
  The smallest number $q'$ of inclusive scan rounds in
  Algorithm~\ref{alg:exscan} for which the total number of rounds is
  $q=\ceil{\log_2 p}$ is $q'\geq q-\log_2(2^q-p+1)$.
\end{corollary}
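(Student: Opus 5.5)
The plan is to read off the exact reach of Algorithm~\ref{alg:exscan} from the proof of Proposition~\ref{thm:exscan}, and then invert that inequality for $q'$ with $q=\ceil{\log_2 p}$ held fixed. After the $q'$ inclusive scan rounds the skip is adjusted to $2^{q'}-1$. Each of the remaining $q-q'$ exclusive rounds then doubles it. So after $q$ rounds in total, every processor $r$ holds
\[
\oplus_{i=\max(0,\,r-(2^{q'}-1)2^{q-q'})}^{r-1}V_i .
\]
This is the full exclusive prefix for all $r\leq p-1$ exactly when $(2^{q'}-1)2^{q-q'}\geq p-1$, which is the boundary case of the chain of equivalences in the proof of Proposition~\ref{thm:exscan}. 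So the task is to find the smallest $q'$ satisfying this condition with $q=\ceil{\log_2 p}$.

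I rewrite the condition as $2^q-2^{q-q'}\geq p-1$, that is,
\[
2^{q-q'}\leq 2^q-p+1 .
\]
Since $2^{q-1}<p\leq 2^q$, the right-hand side lies between $1$ and $2^{q-1}$, so its logarithm is well defined and nonnegative. Taking $\log_2$ gives $q-q'\leq\log_2(2^q-p+1)$, equivalently $q'\geq q-\log_2(2^q-p+1)$. Because the left side of the exponential inequality increases monotonically in $q'$, this condition is both necessary and sufficient. The smallest admissible integer is therefore $\ceil{q-\log_2(2^q-p+1)}=q-\lfloor\log_2(2^q-p+1)\rfloor$.

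I would also check that the range constraint $1\leq q'\leq\ceil{\log_2 p}$ of Proposition~\ref{thm:exscan} is respected. The upper end $q'=q$ always works, since then $2^0=1\leq 2^q-p+1$; this is the two-$\oplus$ doubling algorithm. The lower end $q'=1$ is admissible only when $2^{q-1}\leq 2^q-p+1$, i.e.\ $p\leq 2^{q-1}+1$; this is the $1$-doubling case.

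The step I expect to be the main obstacle is aligning the strict inequality ``$<p-1$'' in the proof of Proposition~\ref{thm:exscan}, which counts rounds that still communicate, with the non-strict ``$\geq p-1$'' needed for completion. It also has to be reconciled with the loop guard $s<p-1$ in Algorithm~\ref{alg:exscan}. I would do this by checking directly that when $(2^{q'}-1)2^{q-q'}\geq p-1$ the second \textbf{while}-loop runs at most $q-q'$ iterations. Then no extra round beyond $q$ is performed, and the invariant already covers index $0$ for every $r$.

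Finally, I would note that the caption's choice ``best $p'=2^{\ceil{\log_2 p}}-p+1$'' is consistent with this result. With $q'$ inclusive rounds the exclusive phase has $q-q'=\lfloor\log_2(2^q-p+1)\rfloor$ rounds, so the exclusive phase is governed by $2^q-p+1$ in exactly this way.
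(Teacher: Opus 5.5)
Your proof is correct and follows the paper's argument. Both start from the completion condition $(2^{q'}-1)2^{q-q'}\geq p-1$, rewrite it as $2^{q-q'}\leq 2^q-p+1$ and take logarithms. Your remarks on monotonicity in $q'$, the integer value $q-\lfloor\log_2(2^q-p+1)\rfloor$, the loop guard, and the range $1\leq q'\leq q$ are correct details that the paper leaves implicit.

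Your closing remark, however, is wrong. The quantity $2^{\ceil{\log_2 p}}-p+1$ controls the reach $2^{q-q'}$ of the exclusive phase (up to rounding down to a power of two); it is not the guard value $p'=2^{q'}$. Take $p=24$: then $2^q-p+1=9$, and the guard $s<9$ in Algorithm~\ref{alg:exscan} runs the inclusive phase with skips $1,2,4,8$. That gives $q'=4$ and $7$ applications of $\oplus$, instead of the optimal $q'=2$ with $5$ applications. So the caption's formula is not consistent with the corollary. The correct choice is $p'=2^{q'}$ with $q'$ from the corollary, or equivalently $p'=\ceil{2^q/(2^q-p+1)}$.
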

\begin{proof}
  We solve
  \begin{align*}
    2^{q-q'}(2^{q'}-1) &\geq p-1 \Leftrightarrow \\
    2^q-2^{q-q'} &\geq p-1 \Leftrightarrow \\
    2^q-p+1 &\geq \frac{2^q}{2^{q'}} \Leftrightarrow \\
    2^{q'} &\geq \frac{2^q}{2^q-p+1}
  \end{align*}
  and and arrive at the claim by taking the logarithm.
\end{proof}

The corresponding best $p'$ to use in Algorithm~\ref{alg:exscan}
to achieve this smallest number of $q'$ inclusive scan rounds is
$2^{\ceil{\log_2 p}}-p+1$. We term the corresponding algorithm the
\emph{best-doubling exclusive scan} algorithm. 

\begin{table}
  \caption{Communication rounds $q$ versus number of $\oplus$ operator
    applications for Algorithm~\ref{alg:exscan} and
    Algorithm~\ref{alg:adjustedexscan} for small processor counts $p$,
    $24\leq p<37$ and $1140\leq p<1153$. For
    Algorithm~\ref{alg:exscan}, we show the cases $q'=1$,
    corresponding to the $1$-doubling algorithm, $q'=q$, corresponding
    to the two-$\oplus$ doubling algorithm; we list the computed $q'$
    for the best-doubling algorithm. Algorithm~\ref{alg:adjustedexscan}
    is the roughly halving algorithm.}
  \label{tab:roundsvsapps}
  \begin{center}
  \begin{tabular}{r|rr|rrr|rr|rr|rr}
    \multicolumn{1}{c}{} &
    \multicolumn{2}{c}{$q'=1$} &
    \multicolumn{3}{c}{Best $q'$} &
    \multicolumn{2}{c}{$q'=q$} &
    \multicolumn{2}{c}{Roughly halving}
    \\
  $p$ & $q$ & $\oplus$ & $q$ & $\oplus$ & $q'$ & $q$ & $\oplus$ & $q$ & $\oplus$ \\
\toprule
24 & 6 & 5 & 
5 & 5 & 
2 & 
5 & 6 & 
5 & 6
\\
25 & 6 & 5 & 
5 & 5 & 
2 & 
5 & 6 & 
5 & 5
\\
26 & 6 & 5 & 
5 & 6 & 
3 & 
5 & 6 & 
5 & 5
\\
27 & 6 & 5 & 
5 & 6 & 
3 & 
5 & 6 & 
5 & 6
\\
28 & 6 & 5 & 
5 & 6 & 
3 & 
5 & 6 & 
5 & 6
\\
29 & 6 & 5 & 
5 & 6 & 
3 & 
5 & 6 & 
5 & 6
\\
30 & 6 & 5 & 
5 & 6 & 
4 & 
5 & 6 & 
5 & 6
\\
31 & 6 & 5 & 
5 & 6 & 
4 & 
5 & 6 & 
5 & 7
\\
32 & 6 & 5 & 
5 & 6 & 
5 & 
5 & 6 & 
5 & 7
\\
33 & 6 & 5 & 
6 & 5 & 
1 & 
6 & 8 & 
6 & 5
\\
34 & 7 & 6 & 
6 & 6 & 
2 & 
6 & 8 & 
6 & 5
\\
35 & 7 & 6 & 
6 & 6 & 
2 & 
6 & 8 & 
6 & 6
\\
36 & 7 & 6 & 
6 & 6 & 
2 & 
6 & 8 & 
6 & 6
\\
\midrule
1140 & 12 & 11 & 
11 & 11 & 
2 & 
11 & 18 & 
11 & 14
\\
1141 & 12 & 11 & 
11 & 11 & 
2 & 
11 & 18 & 
11 & 14
\\
1142 & 12 & 11 & 
11 & 11 & 
2 & 
11 & 18 & 
11 & 14
\\
1143 & 12 & 11 & 
11 & 11 & 
2 & 
11 & 18 & 
11 & 15
\\
1144 & 12 & 11 & 
11 & 11 & 
2 & 
11 & 18 & 
11 & 15
\\
1145 & 12 & 11 & 
11 & 11 & 
2 & 
11 & 18 & 
11 & 14
\\
1146 & 12 & 11 & 
11 & 11 & 
2 & 
11 & 18 & 
11 & 14
\\
1147 & 12 & 11 & 
11 & 11 & 
2 & 
11 & 18 & 
11 & 15
\\
1148 & 12 & 11 & 
11 & 11 & 
2 & 
11 & 18 & 
11 & 15
\\
1149 & 12 & 11 & 
11 & 11 & 
2 & 
11 & 18 & 
11 & 15
\\
1150 & 12 & 11 & 
11 & 11 & 
2 & 
11 & 18 & 
11 & 15
\\
1151 & 12 & 11 & 
11 & 11 & 
2 & 
11 & 18 & 
11 & 16
\\
1152 & 12 & 11 & 
11 & 11 & 
2 & 
11 & 18 & 
11 & 16
\\
\bottomrule
\end{tabular}
\end{center}
\end{table}

With $q'=1$ and $q'=q$, Algorithm~\ref{alg:exscan} contains the
$1$-doubling and two-$\oplus$ doubling exclusive scan algorithms as
special cases. Choosing $p'$ as in Corollary~\ref{cor:bestq} gives the
best-doubling algorithm. The number of communication rounds $q$ and
number of $\oplus$ applications for some small number of processors
$p$ (as used in the experimental evaluation in the appendix) are
listed in Table~\ref{tab:roundsvsapps}. The chosen $q'$ for the
best-doubling algorithm is also listed. The $1$-doubling algorithm
achieves the smallest number of $\oplus$ applications at the expense
of one extra communication round. The number of $\oplus$ applications
are bounded as by Proposition~\ref{thm:exscan}. The best $q'$
increases gradually towards $q$ as $p$ approaches a power of two $2^q$
from below.

In the appendix, the three variants are ran with two concrete MPI
libraries and illustrate that the differences can have practical impact.

\section{Another Exclusive Scan Algorithm}
\label{sec:differentalgorithm}

It is known that the so-called all-reduce operation (in MPI: \mpiallreduce),
in which all processors compute the same result
\begin{align*}
  W &= \oplus_{i=0}^{p-1}V_i
\end{align*}
can be performed in exactly $\ceil{\log_2 p}$ communication rounds if
the $\oplus$operator is commutative~
\cite{BarNoyKipnisSchieber93,Traff25:reducescatter}. This is achieved
with a so-called \emph{roughly halving} circulant graph communication
pattern where the $k$th skip $s_k$ for round $k, k=0,1,\ldots q$ is
computed from the $(k+1)$th skip by halving and rounding up,
$s_{k}=\ceil{s_{k+1}/2}$ (with $s_q=p$).  The invariant maintained by
the all-reduce algorithm of~\cite{Traff25:reducescatter} can naturally
be adopted to the exclusive scan problem, even without relying on
commutativity.  The adopted invariant states that processor $r$ has
computed the partial result
\begin{align*}
  W_r & =  \oplus_{i=\max(0,r-s_{k+1}+1)}^{r-1}V_i
\end{align*}
after round $k$. In other words, processor $r$ has computed an
exclusive prefix sum over the inputs from processor $r-s_{k+1}+1$ up
to but \emph{excluding} processor $r$ itself. As for the all-reduce
algorithm~\cite{Traff25:reducescatter}, we can maintain this invariant
with a roughly halving circulant graph by distinguishing the two cases
$s_k+s_k=s_{k+1}$ ($s_{k+1}$ even) and $s_k+s_k=s_{k+1}+1$ ($s_{k+1}$
odd), and adjusting the communication pattern accordingly. If $s_{k+1}$ is
even, we receive a partial result from processor $r-s_k$, and if
$s_{k+1}$ is odd from processor $r-(s_k-1)=r-s_k+1$.

Computing $s_k$ by roughly halving from $p$, it obviously holds that
\begin{displaymath}
  s_{k+1}-1 =
  \begin{cases}
    (s_k-1)+(s_k-1) & \text{if $s_{k+1}$ is odd} \\
    (s_k-1)+(s_k-1)+1 & \text{if $s_{k+1}$ is even}
  \end{cases}
\end{displaymath}
It follows that $s_{k+1}-1$ is exactly the most significant $q-(k+1)$ bits of
$p-1=s_q-1$. We can therefore compute each $s_k$ directly
\begin{align*}
  s_k &\equiv \shiftlo{(p-1)}{(q-k)}+1 \quad 
\end{align*}
where $\shiftlo{(p-1)}{(q-k)}$ denotes shifting (binary) $p-1$
downwards by $q-k$ positions.

\begin{algorithm}
  \caption{Algorithm for the exclusive scan operation for processor
    $r$ on the circulant graph with adjusted roughly halving skips
    $s_k=\ceil{s_{k+1}/2}$ for $k=0,1,\ldots,q-1$ and $s_q=p$.  Each
    processor has an input vector $V$ and computes the result into
    $W$. The associative operator for pairwise reduction of vectors is
    $\oplus$. The first communication round has no application of
    $\oplus$ and the remaining rounds at most one or two according to
    whether $s_{k+1}$ is even or odd.}
  \label{alg:adjustedexscan}
  \begin{algorithmic}
    \Procedure{ExScan}{$V,W,\oplus$}
    \If{$p=1$} \Comment For good measure, handle $p=1$ case
    \State $W\gets V$
    \State\Return
    \EndIf \Comment Now $p>1$
    \State $f,t\gets r-s_0,r+s_0$
    \Comment To- and from-processors for round $0$
    \If{$0\leq f\band t<p$}
    \State $\bidirec{V,t,C^s_p}{W,f}$
    \ElsIf{$0\leq f$} $\recvdata(W,f)$
    \ElsIf{$t<p$} $\senddata(V,t)$
    \EndIf
    \Comment First communication round, $p>1$
    \For{$k=1,\ldots,q-1$}
    \If{$\isodd(s_{k+1})$} $\varepsilon\gets 1$
    \Else \ $\varepsilon\gets 0$
    \EndIf
    \State $f,t\gets r-s_k+\varepsilon,r+s_k-\varepsilon$
    \Comment To- and from-processors
    \If{$\varepsilon\leq f\band t<p$}
    \If{$\varepsilon=0$} $W'\gets W\oplus V$ \Comment Compute partial result
    \Else\ $W'\equiv W$ \Comment Alias to invariant result
    \EndIf
    \State $\bidirec{W',t}{T,f}$
    \State $W\gets T\oplus W$
    \ElsIf{$\varepsilon\leq f$}
    \State $\recvdata(T,f)$
    \State $W\gets T\oplus W$
    \ElsIf{$0<r\band t<p$}
    \If{$\varepsilon=0$} $W'\gets W\oplus V$ \Comment At most once, since nothing more is received
    \Else\ $W'\equiv W$
    \EndIf
    \State $\senddata(W',t)$
    \ElsIf{$r=0\band\varepsilon=0$} $\senddata(V,t)$
    \EndIf
    \EndFor
    \EndProcedure
  \end{algorithmic}
\end{algorithm}

\begin{proposition}
  \label{thm:exscanadjusted}
  The exclusive scan problem is solved in $\ceil{\log_2 p}$
  send-receive communication rounds by
  Algorithm~\ref{alg:adjustedexscan} using a roughly halving skip
  sequence $s_k=\ceil{s_{k+1}/2}=\shiftlo{(p-1)}{(q-k)}+1$ with at
  most $q+\popcount(p-1)-2-\isevn(p)$ applications of the associative
  $\oplus$ operator for any processor.
\end{proposition}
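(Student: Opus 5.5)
The plan is to prove correctness by induction on the invariant stated before Algorithm~\ref{alg:adjustedexscan}: after round $k$, $W_r=\oplus_{i=\max(0,r-s_{k+1}+1)}^{r-1}V_i$. The round and operator counts then follow from the binary representation $s_k=\shiftlo{(p-1)}{(q-k)}+1$ with $q=\ceil{\log_2 p}$.

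\textbf{Skips and the base case.} For $p>1$ we have $2^{q-1}\leq p-1<2^q$, so the top bit of $p-1$ is bit $q-1$. Hence $s_0=1$ and $s_1=2$, and $s_q=p$. In round $0$, processor $r\geq 1$ receives $V_{r-1}$ into $W$. This is exactly the invariant for $k=0$, namely the range from $r-s_1+1=r-1$ to $r-1$. Processor $0$ holds the empty sum, which is consistent with the invariant being void for $r=0$.

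\textbf{Inductive step.} Assume the invariant holds after round $k-1$, so $W_r=\oplus_{i=\max(0,r-s_k+1)}^{r-1}V_i$, and split on the parity of $s_{k+1}$.
\begin{itemize}
\item If $s_{k+1}$ is even ($\varepsilon=0$), then $s_{k+1}=2s_k$. The sender $r-s_k$ transmits $W\oplus V=\oplus_{i=\max(0,r-2s_k+1)}^{r-s_k}V_i$. Prepending this as $T\oplus W$ to $W_r$ gives the range $[\max(0,r-s_{k+1}+1),\,r-1]$.
\item If $s_{k+1}$ is odd ($\varepsilon=1$), then $s_{k+1}=2s_k-1$. The sender $r-s_k+1$ transmits its bare $W=\oplus_{i=\max(0,r-2s_k+2)}^{r-s_k}V_i$, which again yields the lower bound $r-s_{k+1}+1$.
\end{itemize}
Two boundary points need care. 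First, when $\varepsilon=1$, processor $0$ carries nothing, which is why the condition reads $\varepsilon\leq f$. Second, when the sender's range is truncated at $0$, the received prefix is simply a true prefix, so the $\max(0,\cdot)$ bounds compose correctly. Since we always compute $T\oplus W$ with $T$ covering lower indices, commutativity is never used. After round $q-1$ the lower bound is $\max(0,r-p+1)=0$, so every processor $r>0$ holds its exclusive prefix. The loop runs over $k=0,\ldots,q-1$, giving exactly $q=\ceil{\log_2 p}$ rounds.

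\textbf{Counting applications of $\oplus$.} Round $0$ uses none. Each round $k\geq1$ costs at most one application for receiving ($T\oplus W$), plus one more when $\varepsilon=0$ for computing $W'=W\oplus V$. A processor that only sends computes $W'$ at most once, because it receives nothing afterwards and $W$ no longer changes. Now $s_{k+1}$ is even iff $s_{k+1}-1=\shiftlo{(p-1)}{(q-k-1)}$ is odd, i.e.\ iff bit $q-k-1$ of $p-1$ is set. As $k$ ranges over $1,\ldots,q-1$, this covers all bits of $p-1$ except the leading one. So there are $\popcount(p-1)-1$ rounds with $\varepsilon=0$, and the total is at most $(q-1)+\popcount(p-1)-1$.

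\textbf{The $\isevn(p)$ saving (main obstacle).} This is the step I expect to need the most care, since it requires a per-processor rather than per-round argument. If $p$ is even, the last round $k=q-1$ has $\varepsilon=0$ with $s_{q-1}=p/2$. No processor can have both $r-p/2\geq0$ and $r+p/2<p$. So in that round each processor either receives (one $\oplus$) or sends a freshly computed $W'$ (one $\oplus$), never both. This saves one application for every processor, giving the bound $q+\popcount(p-1)-2-\isevn(p)$.

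What remains to verify is that the worst case over processors really attains the sum of these per-round maxima. In particular, the ``at most once'' rule for pure senders must not push any processor above this count.
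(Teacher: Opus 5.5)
Your proposal is correct and follows the paper's own proof: you verify the invariant by splitting on the parity of $s_{k+1}$, then count one application per receiving round, one extra application in the $\popcount(p-1)-1$ rounds with $s_{k+1}$ even (correctly identified with bits $q-2,\ldots,0$ of $p-1$), and a saving of one in the last round when $p$ is even. The worry in your final paragraph does not need resolving: the statement claims only an upper bound, your per-round bounds already give it, and the at-most-once rule for pure senders can only remove applications.
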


\begin{proof}
  By distinguishing the two cases $s_{k+1}$ odd and $s_{k+1}$ even, it
  can easily be seen that the claimed invariant holds. This implies
  that for each processor, $W$ is the exclusive prefix sum over the
  inputs from the smaller numbered processors.  In the first round,
  there is no application of the $\oplus$ operator by any processor.
  In the $q-1$ following commutation rounds, all receiving processors
  perform a reduction, and in the rounds where $s_{k+1}$ is even,
  sending processors perform an additional reduction. In the last
  round, if $p$ is even, all processors either send or receive, but
  not both, which yields one less reduction.  The claim then follows
  from the observation since the number of even $s_k$ for
  $k=2,\ldots,q-1$ is $\popcount(p-1)-1$.
\end{proof}

The number of $\oplus$ applications for different, small numbers of
processors $p$ are also shown in Table~\ref{tab:roundsvsapps} and are
in accordance with Proposition~\ref{thm:exscanadjusted}.

\section{Summary}

This note surveyed known algorithms for the message-passing inclusive
and exclusive scan operations (like \mpiscan and \mpiexscan in
MPI~\cite{MPI-4.1}) suitable for small input and output vectors. We
gave two efficient algorithms for the exclusive scan operation that
can both readily be used to implement the \mpiexscan collective.  The
first algorithm is a configurable, hybrid algorithm that computes the
exclusive scan in $q=\ceil{\log_2
  (p-1)+\log_2\frac{2^{q'}}{2^{q'}-1}}$ simultaneous send-receive
communication rounds with at most $q-q'-2$ applications of the given,
associative (and possibly expensive) binary operator for any given
choice of initial rounds $q'$. We also gave a best $q'$ for
$q=\ceil{\log_2 p}$ communication rounds overall.  The second
algorithm is a specialization of a round-optimal all-reduce algorithm
which runs in $\ceil{\log_2 p}$ communication rounds with
$q+\popcount(p-1)-2-\isevn(p)$ applications of the operator.
It is an open question whether an algorithm exists that
can compute the exclusive scan in $\ceil{\log_2 (p-1)}$ communication
rounds and at most the same number of operator applications. The
algorithms discussed use a circulant graph communication pattern, and
complement the algorithms discussed
in~\cite{Traff25:optimalbroadcast,Traff25:reducescatter} for all the
other MPI collective operations.

\bibliographystyle{plain}
\bibliography{traff,parallel} 

\appendix
\section{An Experimental Evaluation}
\label{sec:evaluation}

We have implemented the two exclusive scan algorithms for and in
MPI~\cite{MPI-4.1} exactly as shown in Algorithm~\ref{alg:exscan} and
Algorithm~\ref{alg:adjustedexscan}.  Simultaneous send and receive is
implemented with \mpisendrecv, and local application of a pre- or
user-defined MPI operator with \mpireducelocal. The latter is a
two-argument operation, taking an input and an input-output vector and
reducing these together in this order. In the inclusive scan phase of
Algorithm~\ref{alg:exscan} where an inclusive prefix sum partial
result $W\oplus V$ has to be sent, as well as for
Algorithm~\ref{alg:adjustedexscan} when $s_{k+1}$ is even, a
three-argument local reduction function would have been
desirable~\cite{Traff23:attributes} to avoid an extra copy into an
intermediate reduce buffer.

We aim to estimate how Algorithm~\ref{alg:exscan} performs for
different choices of $p'=2^{q'}$ in comparison to the library native
\mpiexscan operation, and how this compares against
Algorithm~\ref{alg:adjustedexscan}.  To possibly see the small
differences by one communication round, we experiment with small
$p=36$ as well as with the full system with $p=36\times 32$
processors, see Table~\ref{tab:roundsvsapps}. 

Our experimental system is a medium sized $36\times 32$ processor
cluster with $36$~dual socket compute nodes, each with two Intel(R)
Xeon(R) Gold 6130F $16$-core CPUs. The compute nodes are
interconnected via dual Intel Omnipath interconnects each with a
bandwidth of $100$ Gbytes/s.  We give results with two available MPI
libraries, namely \hydrampich and \hydraopenmpi, and observe that MPI
libraries differ highly in both absolute performance and in
scalability characteristics.  The implementations and benchmarks were
compiled with \gcc with the \texttt{-O3} option.

As element type for the scan operations, we have used \mpilong, and
\mpibxor as the binary operator. Our benchmarking procedure performs,
for each element count, \benchrep measurement repetitions with
\benchwarm warmup measurements. We synchronize the MPI processes with
\mpibarrier (twice), and for each experiment determine the time for
the slowest process to complete the exclusive scan operation. Over the
\benchrep repetitions, the minimum of these times is
reported~\cite{Traff12:mpibenchmark}.

The results for the two MPI libraries, \hydrampich and \hydraopenmpi,
for element counts from $c=0$ to $c=100\,000$ in $p=36\times 1$ and
$p=36\times 32$ MPI processes are shown in
Table~\ref{tab:sometimes-mpich} and Table~\ref{tab:sometimes-openmpi}.

\begin{table}
  \caption{Measured running times for the native \mpiexscan operation
    and the $q'$-doubling algorithm with the \hydrampich library in the
    $p=36\times 1$ and $p=36\times 32$ MPI process configurations,
    respectively, for element counts $c=1,10,100,1\,000,10\,000,100\,000$ and
    $q'=1,q$, best $q'$ and the roughly halving halving.}
  \label{tab:sometimes-mpich}
  \begin{center}
    \begin{tabular}{rrrrrr}
      & \multicolumn{5}{c}{$p=36\times 1$ MPI processes} \\
      \midrule
      $c$ &
      \mpiexscan & $1$-doubling & best-doubling & two-$\oplus$ & halving \\
      (\mpilong) & ($\mu$seconds) & ($\mu$seconds) & ($\mu$seconds) & ($\mu$seconds) & ($\mu$seconds) \\
      \toprule
      0 & 
      0.08 &
      0.06 &
      0.05 &
      0.05 &
      0.07 \\
      1 & 
      9.85 &
      9.18 &
      9.12 &
      8.20 &
      10.40 \\
      10 &
      17.22 &
      18.22 &
      16.83 &
      15.91 &
      17.38 \\
      100 & 
      19.05 &
      19.70 &
      18.08 &
      17.26 &
      19.34 \\
      1000 & 
      36.39 &
      34.66 &
      32.47 &
      35.10 &
      36.00 \\
      10000 & 
      255.52 &
      254.67 &
      230.51 &
      267.71 &
      248.16 \\
      100000 & 
      2774.59 &
      1571.19 &
      1510.11 &
      1969.52 &
      1707.75 \\      
      \bottomrule
      & \multicolumn{5}{c}{$p=36\times 32$ MPI processes} \\
      \midrule
      $c$ &
      \mpiexscan & $1$-doubling & best-doubling & two-$\oplus$ & halving \\
      (\mpilong) & ($\mu$seconds) & ($\mu$seconds) & ($\mu$seconds) & ($\mu$seconds) & ($\mu$seconds) \\
      \toprule
      0 & 
      0.09 &
      0.07 &
      0.07 &
      0.07 &
      0.07 \\
      1 & 
      27.53 &
      37.10 &
      27.08 &
      22.20 &
      27.69 \\
      10 & 
      32.64 &
      41.01 &
      35.86 &
      33.99 &
      37.22 \\
      100 & 
      37.76 &
      41.54 &
      40.53 &
      39.08 &
      42.48 \\
      1000 & 
      164.14 &
      156.18 &
      146.94 &
      158.86 &
      157.13 \\
      10000 & 
      1015.59 &
      1095.81 &
      1037.51 &
      1113.78 &
      1106.17 \\
      100000 & 
      11959.38 &
      11187.50 &
      10997.74 &
      14938.65 &
      13319.05 \\
      \bottomrule
    \end{tabular}
  \end{center}
\end{table}

\begin{table}
  \caption{Measured running times for the native \mpiexscan operation
    and the $q'$-doubling algorithm with the \hydraopenmpi library in the
    $p=36\times 1$ and $p=36\times 32$ MPI process configurations,
    respectively, for element counts $c=1,10,100,1\,000,10\,000,100\,000$ and
    $q'=1,q$, best $q'$ and the roughly halving halving.}
  \label{tab:sometimes-openmpi}
  \begin{center}
    \begin{tabular}{rrrrrr}
      & \multicolumn{5}{c}{$p=36\times 1$ MPI processes} \\
      \midrule
      $c$ &
      \mpiexscan & $1$-doubling & best-doubling & two-$\oplus$ & halving \\
      (\mpilong) & ($\mu$seconds) & ($\mu$seconds) & ($\mu$seconds) & ($\mu$seconds) & ($\mu$seconds) \\
      \toprule
      0 &
0.06 &
0.05 &
0.05 &
0.06 &
0.07 \\
      1 &
44.63 &
9.15 &
7.58 &
8.07 &
8.61 \\
      10 &
54.15 &
11.24 &
9.43 &
10.06 &
10.65 \\
      100 &
64.08 &
13.60 &
11.04 &
12.13 &
12.53 \\
      1000 &
163.16 &
31.56 &
26.64 &
32.65 &
32.05 \\
      10000 &
1127.28 &
286.57 &
261.02 &
290.96 &
294.75 \\
      100000 &
7379.84 &
1844.17 &
1729.44 &
2246.06 &
2078.20 \\
      \bottomrule
      & \multicolumn{5}{c}{$p=36\times 32$ MPI processes} \\
      \midrule
      $c$ &
      \mpiexscan & $1$-doubling & best-doubling & two-$\oplus$ & halving \\
      (\mpilong) & ($\mu$seconds) & ($\mu$seconds) & ($\mu$seconds) & ($\mu$seconds) & ($\mu$seconds) \\
      \toprule
      0 &
0.18 &
0.06 &
0.15 &
0.17 &
0.40 \\
1 &
917.53 &
15.56 &
16.10 &
14.12 &
17.06 \\
      10 &
1462.47 &
19.26 &
19.13 &
17.41 &
21.73 \\
100 &
2017.98 &
29.10 &
26.80 &
25.41 &
30.38 \\
      1000 &
6618.10 &
135.03 &
129.20 &
137.45 &
143.70 \\
      10000 &
34326.68 &
1065.49 &
1000.15 &
1030.18 &
1097.96 \\
      100000 &
357144.19 &
10423.22 &
10674.83 &
14456.33 &
12879.54 \\      
      \bottomrule
    \end{tabular}
  \end{center}
\end{table}

The \hydraopenmpi library obviously has a poor implementation of
\mpiexscan upon which the new algorithms both improve easily by a
large factor.  The \hydrampich library (which seems the overall best
performing library among our available libraries) apparently applies a
logarithmic round algorithm all the way up to the large count of
$c=100\,000$ elements; this can possibly be improved by using other
algorithms~\cite{LakhotiaPetriniKannanPrasanna22,Traff09:twotree,Traff06:scan}.
Neither of the libraries seem to do the \mpiexscan operation in a
hierarchical fashion when the compute nodes are full; this can
possibly be improved~\cite{Traff20:mpidecomp}.  The new algorithms
perform comparably in the two libraries: The \hydraopenmpi library has
better performance in comparison when the compute nodes are full,
whereas the \hydrampich library performs better in the case with only
one MPI process per node.

For both libraries, the best-doubling algorithm almost everywhere
fares the best. For the small element counts with $c=1,10,100$, the
two-$\oplus$ algorithm can sometimes be slightly better (with very
little difference to the $q-1$-doubling algorithm), but as $c$ grows,
the extra applications of $\oplus$ noticeably decreases the
performance.

The results illustrate that care must be taken when implementing
\mpiexscan, and that the configurability offered by
Algorithm~\ref{alg:exscan} can have a practical impact. With the best
choice of $q'$, this algorithm also does better than
Algorithm~\ref{alg:adjustedexscan}. The implementation of both
algorithms is straightforward, and it would make sense to include
either in MPI libraries.

\end{document}